\numberwithin{equation}{section}
\newtheorem{theorem}{Theorem}[section]
\newtheorem{lemma}[theorem]{Lemma}
\theoremstyle{definition}
\DeclareDocumentCommand\RR{}{\mathbb{R}}
\DeclareDocumentCommand\QQ{}{\mathbb{Q}}
\newcommand{\Tn}{\mathcal{T}_n}
\newcommand{\Tnm}{\mathcal{T}_{n-1}}
\newcommand{\mm}{\mathbb{M}}
\DeclareDocumentCommand\onevec{o}{\IfNoValueTF{#1}{\mathbbm{1}}{\mathbbm{1}_{#1}}}
\DeclareDocumentCommand\orderO{m}{\mathcal{O}(#1)}
\DeclareDocumentCommand\cplxNP{}{\mathsf{NP}}
\DeclareDocumentCommand\cplxCoNP{}{\mathsf{coNP}}
\DeclareDocumentCommand\transpose{m}{#1^{\intercal}}
\tikzstyle{vrtx} =[circle,fill=black,draw=black,inner sep=0pt,minimum size=3pt]
\tikzstyle{labvrtx}=[circle,fill=white,draw=black,inner sep=0.02cm, minimum size=0.5cm]
\title{A characterization of Linearizable instances of the Quadratic Traveling Salesman Problem}
\author[Abraham P. Punnen, Brad D. Woods]{ Abraham P. Punnen, Matthias Walter, and  Brad D. Woods}
\address{Abraham P. Punnen, Department of Mathematics, Simon Fraser University Surrey, 250-13450 102nd AV, Surrey, British Columbia, V3T 0A3, Canada, Email: {\tt apunnen@sfu.ca}}
\address{Matthias Walter, RWTH Aachen University, Kackertstra{\ss}e 52072 Aachen, Germany {\tt walter@or.rwth-aachen.de}}
\address{Brad Woods, Department of Mathematics, Simon Fraser University Surrey, 250-13450 102nd AV, Surrey, British Columbia, V3T 0A3, Canada. {\tt bdw2@sfu.ca}}
\begin{document}

\date{\today}

\begin{abstract}
  We consider the linearization problem associated with the quadratic traveling salesman problem (QTSP).
  Necessary and sufficient conditions are given for a cost matrix $Q$ for the QTSP to be linearizable.
  It is shown that these conditions can be verified in $\orderO{n^5}$ time.
  Some simpler sufficient conditions for linearization are also given along with related open problems.
\end{abstract}

\maketitle

\section{Introduction}

The Traveling Salesman Problem (TSP) is the problem of finding a least-cost Hamiltonian cycle in an edge-weighted graph.
It is one of the most widely studied hard combinatorial optimization problems.
For details on the TSP, we refer to the well-known books~\cite{ApplegateBCC11, GutinP06, LawlerLRS85, Reinelt94}.
For clarity of discussion, we will refer to this problem as the \emph{linear TSP}.

Let $G = (V,E)$ be a graph (directed or undirected) on the vertex set $ V = \{1, \ldots ,n\}$.
For each pair $(e,f)$ of edges in $E$, a cost $q_{ef}$ is prescribed.
Let $\Tn$ be the set of all Hamiltonian cycles (tours) in $G$.
The quadratic cost $Q[\tau]$ of a tour $\tau \in \Tn$ is given by $ Q[\tau] := \sum \limits_{(e,f) \in \tau \times \tau} q_{ef}$.
Then the \emph{quadratic traveling salesman problem} (QTSP) is to find a tour $\tau \in \Tn$ such that $Q[\tau]$ is minimum.

The problem QTSP received only limited attention in the literature.
A special case of the \emph{adjacent quadratic TSP} in which $q_{ef}$ is assumed to be zero if edges $e$ and $f$ are not adjacent, is frequently also called quadratic TSP.
This problem has been studied by various authors recently~\cite{Fischer14, Fischer16, FischerFJKMG14, FischerH13, JagerM08, RostamiMBG16, AichholzerFFMPPS17}.
While QTSP is known to be strongly $\cplxNP$-hard for Halin graphs, the adjacent quadratic TSP can be solved in $\orderO{n}$ time on this class of graphs~\cite{WoodsP17Halin}.
QTSP defined over the set of pyramidal tours is also known to be strongly $\cplxNP$-hard~\cite{WoodsP17PQ}.
Other special cases of QTSP are studied in~\cite{WoodsP17} along with various complexity results.
The $k$-neighbour TSP~\cite{WoodsPS17}, Angular metric TSP~\cite{AggarwalKMS97}, Dubins vehicle TSP~\cite{SavlaFB08} are also related to QTSP.

For the rest of this paper, unless otherwise stated, we assume that $G$ is a complete directed graph.
Let $Q = (q_{ef})$ be the matrix of size $n(n-1)\times n(n-1)$ consisting of the cost of the edge pairs.
Note that $(e,e)$ is a permitted pair.
An instance of QTSP is completely specified by $Q$.
The matrix $Q$ is said to be \emph{tour-linearizable} (or simply linearizable) if there exists a matrix $C = (c_{ij})$ of size $n \times n$ such that $Q[\tau] = C(\tau)$ for all $\tau \in \Tn$, where $C(\tau) := \sum_{(i,j)\in \tau} c_{ij}$.
Such a cost matrix $C$ is called a \emph{linearization} of $Q$ for the QTSP.

The \emph{QTSP linearization problem} can be stated as follows:
``Given an instance of QTSP with cost matrix $Q$, verify if it is linearizable and if yes, compute a linearization $C$ of $Q$.''

It may be noted that since (the decision versions of) the QTSP and the TSP are both $\cplxNP$-complete problems, there exists a polynomial-time reduction from one problem to the other.
We are not attempting to find a reduction from QTSP to TSP for some special cases, although such a result is a byproduct.
The question of linearization is more involved and requires the development of structural properties of linearizable matrices.
Unlike QTSP or TSP, there are no immediate certificates for linearizability or non-linearizability of matrices.
Hence, even containment of the linearization problem in $\cplxNP$ or $\cplxCoNP$ is challenging.

Linearization problems have been studied in the literature for various quadratic combinatorial optimization problems~\cite{AdamsW14, cCelaDW16, CusticSPB17, CusticP18, KabadiP11, LendlP17, PunnenK13, PunnenP18, HuS18, HuS18a} and polynomial-time algorithms are available in all these cases.
Unfortunately, the simple characterization established for quadratic spanning trees~\cite{CusticP18} and for various bilinear type problems~\cite{cCelaDW16, CusticSPB17, LendlP17} do not seem to extend to the case of QTSP.
The characterization of linearizable instances of the quadratic assignment problem studied in~\cite{KabadiP11} have some similarities to the case of QTSP that we study here.
However, to solve the linearization problem for QTSP one needs to overcome more challenges, which makes it an interesting and relevant problem to investigate.
A family of linearizable cost matrices can be used to obtain strong continuous relaxations of quadratic 0-1 integer programming problems, as demonstrated in~\cite{HuS18a, PunnenP18}.
Strong reformulations of quadratic 0-1 integer programming problems are reported in~\cite{BillionnetEP09, PornNSW17} exploiting linear equality constraints.
Moreover, strong reformulations of quadratic 0-1 integer programming using quadratic equality constraints are reported in~\cite{GalliL14}.
As observed in~\cite{PunnenP18}, linearizable cost matrices immediately give redundant quadratic equality constraints and, exploiting the results of~\cite{GalliL14}, strong reformulations of quadratic combinatorial optimization problems can be obtained.
Further, since linearizable cost matrices for any quadratic combinatorial optimization problem form a linear subspace of the space of all $n(n-1) \times n(n-1)$ matrices, any basis of this subspace can be used to obtain strong reformulations~\cite{PunnenP18}.

With this motivation, we study the QTSP linearization problem in this paper.
Necessary and sufficient conditions are presented for a quadratic cost matrix $Q$ associated with QTSP being linearizable.
We show that these conditions can be verified by an $\orderO{n^5}$-time algorithm that also constructs a linearization of $Q$ whenever one exists.
However, it may be noted that the input size of a general QTSP is $\orderO{n^4}$ and hence the complexity of our algorithms is almost linear in the input size in the general case.
Our characterization extends directly to the case of complete undirected graphs.
Finally, some easily verifiable sufficient conditions for linearizability are also given along with some open problems.

The terminology ``linearization'' has also been used in a different context where a quadratic integer program is written as a linear integer program by adding new variables and constraints~\cite{AdamsFG04,Glover75}.
We want to emphasize that the concept of linearization used in this paper is different.

We conclude this section by presenting the most-relevant notation used in the paper.
Let $N := \{ 1,2,\ldots,n\}$ and $N_i := N \setminus \{ i\}$ for $i \in N$.
All matrices are represented using capital letters (sometimes with superscripts, hats, overbars, etc.) and all elements of the matrix are represented by the corresponding subscripted lower-case letters (along with the corresponding superscripts, hats, overbars, etc.), where the subscripts denoting row and column indices.
When the rows and columns are indexed by elements of $N \times N$, the $((i,j),(k,l))$'th element of the matrix $Q$ is represented by $q_{ijkl}$, of $Q^R$ by $q^R_{ijkl}$, etc.
Vectors in $\RR^n$ are represented by bold lower-case letters (sometimes with superscripts, hats, overbars, etc.).
The $i$'th component of vector $\mathbf{a}$ is $a_i$, of vector $\bar{\mathbf{b}}$, is $\bar{b}_i$, etc.
Rows and columns of all matrices of size $n \times n$ are indexed by $N$, whereas rows and columns of all $n(n-1) \times n(n-1)$ matrices are indexed by edges of $G$ (i.e., by ordered pairs $(i,j) \in N \times N$, $i \neq j$).
The transpose of a matrix $Q$ is denoted by $\transpose{Q}$.
The vector space of all real-valued $n \times n$ matrices with standard matrix addition and scalar multiplication is denoted by $\mathbb{M}^n$.
Accordingly, $\mathbb{M}^{n(n-1)}$ is the vector space of all $n(n-1) \times n(n-1)$ matrices.

\section{The QTSP linearization problem}

Let us first introduce some general properties and definitions used in the paper.
A matrix $C \in\mm^n$ satisfies the \emph{constant value property} (CVP) if there exists a constant $K$ such that $C(\tau) = K$ for all $\tau \in \Tn$.
A matrix $C$ is said to be a \emph{weak sum matrix} if there exist vectors $\mathbf{a},\mathbf{b}\in \mathbb{R}^n$ such that $c_{ij} = a_i + b_j$ for $i,j \in N$ with $i \neq j$.

\begin{lemma}[\cite{Berenguer79, Gabovich76, KabadiP03, KabadiP06}]
  A matrix $C \in \mm^n$ satisfies the CVP if and only if it is a weak sum matrix (i.e., $c_{ij} = a_i + b_j$ for all $i,j \in N$ with $i \neq j$).
  In this case, the constant value of the tours is given by $\sum_{i\in N}(a_i + b_i)$.
\end{lemma}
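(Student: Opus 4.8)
The plan is to prove both implications, beginning with the easy one. \emph{Weak sum $\Rightarrow$ CVP.} If $c_{ij}=a_i+b_j$ for all $i\neq j$, then in any tour $\tau\in\Tn$ each vertex occurs exactly once as a tail and exactly once as a head of the edges of $\tau$. Hence $C(\tau)=\sum_{(i,j)\in\tau}(a_i+b_j)=\sum_{i\in N}a_i+\sum_{j\in N}b_j=\sum_{i\in N}(a_i+b_i)$, which is independent of $\tau$; this simultaneously establishes the CVP and the stated constant value.

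For the converse I would exploit exchanges between tours that preserve the single-cycle structure in the \emph{directed} graph. The obstacle is that an ordinary transposition or $2$-opt exchange either splits the cycle in two or forces a segment reversal (flipping the orientation of many edges, which is awkward in the directed setting). I would instead use the orientation-preserving ``double bridge'': writing a tour as a concatenation of four directed paths $P_1P_2P_3P_4$ and reconnecting them as $P_1P_3P_2P_4$ changes exactly three edges while keeping every path's orientation intact. Choosing the paths as short as possible yields, for any six distinct vertices, the two tours $i\to p\to j\to q\to k\to r\to i$ and $i\to q\to k\to p\to j\to r\to i$; applying the CVP to this pair gives the three-term identity
\[
  c_{ip}+c_{jq}+c_{kr}=c_{iq}+c_{jr}+c_{kp}\qquad(\star)
\]
valid for all distinct $i,j,k,p,q,r$ (so I would assume $n\geq 6$ and dispose of the finitely many smaller cases directly).

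From $(\star)$ the target is the quadrangle identity $c_{ip}-c_{iq}=c_{jp}-c_{jq}$, which is precisely the coordinatewise condition for a weak sum matrix. Writing $\delta_i(x,y):=c_{ix}-c_{iy}$ (additive and antisymmetric in $x,y$) and $g_{ij}:=\delta_i-\delta_j$, the identity $(\star)$ reads $\delta_i(p,q)+\delta_j(q,r)+\delta_k(r,p)=0$; subtracting the same relation with the tails $i$ and $j$ interchanged yields $g_{ij}(p,q)=g_{ij}(q,r)$. Feeding this back into the additivity of $g_{ij}$ (applied to the reordered heads $q,p,r$) gives $-g_{ij}(p,q)=g_{ij}(p,r)=2g_{ij}(p,q)$, whence $3g_{ij}(p,q)=0$ and therefore $g_{ij}\equiv 0$, i.e.\ the quadrangle identity holds. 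Fixing a reference head and a reference tail, I would then set $b_j:=c_{ij}-c_{i1}$ (well defined by the quadrangle identity, independent of the choice of $i$) together with a consistent choice of $a_i$, and verify $c_{ij}=a_i+b_j$ on all off-diagonal entries; the only care needed is that the forbidden diagonal never forces a coincidence of indices, which holds once $n$ is sufficiently large.

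The main obstacle throughout is the first structural step: producing tour-to-tour moves that (a) preserve the directed Hamiltonian structure, (b) alter few enough edges to yield usable low-order identities, and (c) are rich enough to force the full weak-sum structure. The double bridge together with the $3g_{ij}=0$ cancellation addresses all three at once. As a cleaner but less self-contained alternative, I could combine the easy inclusion (weak sum matrices lie in the CVP space) with a dimension count: the weak sum matrices form a space of dimension $2n-1$, while the CVP space is the orthogonal complement of the span of differences of tour incidence vectors, whose dimension equals that of the asymmetric TSP polytope, namely $n(n-1)-(2n-1)$; matching the resulting dimensions then forces equality.
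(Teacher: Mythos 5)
Your proposal cannot be matched against an argument in the paper, because the paper gives none: this lemma is quoted as a known result, with citations to Berenguer, Gabovich, and Kabadi--Punnen, so what you have written is a self-contained substitute for that citation. Checking it on its own merits, the substance is right. The easy direction and the constant $\sum_{i\in N}(a_i+b_i)$ follow from the in-degree/out-degree argument exactly as you say. For the converse, your structural insight is the correct one for the \emph{directed} setting: 2-opt moves reverse arc orientations, whereas the double bridge preserves them and exchanges exactly three arcs; the two tours $i\to p\to j\to q\to k\to r\to i$ and $i\to q\to k\to p\to j\to r\to i$ indeed share $(p,j),(q,k),(r,i)$ and differ only in $\{(i,p),(j,q),(k,r)\}$ versus $\{(i,q),(k,p),(j,r)\}$, and padding the $r$-to-$i$ segment with the remaining vertices gives $(\star)$ for every six distinct vertices once $n\geq 6$. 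The algebra is sound: subtracting $(\star)$ from its $i\leftrightarrow j$ swap eliminates $\delta_k$ and gives $g_{ij}(p,q)=g_{ij}(q,r)$, additivity and antisymmetry then give $2g_{ij}(p,q)=g_{ij}(p,r)=-g_{ij}(p,q)$, hence $g_{ij}\equiv 0$, and the quadrangle identity yields the weak-sum form by the reference-vertex construction (with the one extra consistency check needed to define the $a$-value of the reference vertex, which you gloss over but which is standard). Two caveats: the cases $n\in\{3,4,5\}$ lie genuinely outside your argument (six distinct vertices are needed both to form $(\star)$ and to cancel $k$), so "dispose of directly" is doing real work there -- the lemma is true in those cases (for $n=3$ both spaces have dimension $5$, and similarly for $n=4,5$), but that verification must actually be supplied; and your fallback dimension count silently invokes the Gr\"otschel--Padberg theorem that the asymmetric TSP polytope has dimension $n(n-1)-(2n-1)$, a heavier hammer than the lemma itself. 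What your route buys, compared with the paper's citation, is a short elementary proof tailored to complete digraphs; what the citation buys is brevity and coverage of all $n\geq 3$ at once.
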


Two matrices $Q^1$ and $Q^2 \in \mm^{n(n-1)}$ are \emph{equivalent modulo a linear matrix} if there exists a matrix $H \in \mm^n$ such that $Q^1[\tau] = Q^2[\tau] + H(\tau)$ for all $\tau \in \Tn$.
Using the fact that the value of a tour depends on the sum $q_{ijk\ell} + q_{k\ell ij}$ and not the individual values $q_{ijk\ell}$ and $q_{k\ell ij}$, we obtain the following lemma.
\begin{lemma}
  \label{LemmaInvariantTransformations}
  Let $Q \in \mm^{n(n-1)}$.
  Then the following statements are equivalent.
  \begin{enumerate}
  \item
    $Q$ is linearizable.
  \item
    $\transpose{Q}$ is linearizable.
  \item
    $\frac{1}{2}\left(Q+\transpose{Q} \right)$ is linearizable.
  \item
    $Q+D+S$ is linearizable, where $D$ is any diagonal matrix and $S$ is any skew-symmetric matrix of the same dimension as $Q$.
  \item
    Any matrix equivalent to $Q$ modulo a linear matrix is linearizable.
  \end{enumerate}
\end{lemma}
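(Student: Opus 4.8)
The plan is to reduce all five statements to a single structural observation: for each fixed tour $\tau$ the assignment $Q \mapsto Q[\tau]$ is a linear functional of $Q$, and grouping the ordered pairs $(e,f),(f,e)$ occurring in $\tau \times \tau$ yields
\[
Q[\tau] = \sum_{e \in \tau} q_{ee} + \sum_{\substack{\{e,f\}\subseteq\tau\\ e \neq f}}\left(q_{ef} + q_{fe}\right),
\]
so that $Q[\tau]$ depends on $Q$ only through its diagonal and through the symmetric sums $q_{ef}+q_{fe}$ (equivalently $q_{ijk\ell}+q_{k\ell ij}$). I would record this identity first, along with the obvious fact that $C \mapsto C(\tau)$ is linear as well. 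The upshot I want to extract is that linearizability of $Q$ depends only on the function $\tau \mapsto Q[\tau]$ it induces on $\Tn$: if two matrices induce the same tour-value function, then a linearization of one is a linearization of the other.

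Next I would isolate the two perturbation facts. For a skew-symmetric $S$, the contribution of each unordered pair $\{e,f\}$ is $s_{ef}+s_{fe}=0$ and the diagonal vanishes, so $S[\tau]=0$ for every $\tau$. For a diagonal $D$ only the terms $e=f$ survive, giving $D[\tau]=\sum_{(i,j)\in\tau} d_{ijij} = \widehat{D}(\tau)$, where $\widehat{D}\in\mm^n$ is the matrix with entries $\widehat{d}_{ij} := d_{ijij}$; thus adding $D+S$ perturbs $Q[\tau]$ by exactly the linear term $\widehat{D}(\tau)$. In the same spirit, relabelling the summation index gives $\transpose{Q}[\tau]=Q[\tau]$, and linearity then yields $\tfrac12\bigl(Q+\transpose{Q}\bigr)[\tau]=\tfrac12\bigl(Q[\tau]+\transpose{Q}[\tau]\bigr)=Q[\tau]$.

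With these in hand each equivalence is a one-line absorption into the linearization matrix. Since $Q$, $\transpose{Q}$ and $\tfrac12(Q+\transpose{Q})$ all induce the identical function $\tau\mapsto Q[\tau]$, statements (1), (2) and (3) are equivalent by the upshot above. For (1)$\Leftrightarrow$(4): if $C$ linearizes $Q$ then $(Q+D+S)[\tau]=Q[\tau]+\widehat{D}(\tau)+0=(C+\widehat{D})(\tau)$, so $C+\widehat{D}$ linearizes $Q+D+S$; the converse is the same computation applied to the diagonal $-D$ and skew-symmetric $-S$. For (1)$\Leftrightarrow$(5): equivalence modulo a linear matrix means $Q'[\tau]=Q[\tau]+H(\tau)$ for some $H\in\mm^n$, the relation is symmetric (replace $H$ by $-H$), and $C$ linearizes $Q$ iff $C+H$ linearizes $Q'$; taking $H=0$ shows that $Q$ is itself among the matrices equivalent to it, which closes the loop.

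There is no deep obstacle here; the lemma is a warm-up that fixes the invariances used throughout the paper. The only points demanding care are the bookkeeping in the master identity---correctly peeling off the diagonal and seeing that the off-diagonal ordered pairs occur in symmetric twos---and the clean separation between an $n(n-1)\times n(n-1)$ diagonal matrix $D$ and the $n\times n$ linear matrix $\widehat{D}$ it induces on tours, since it is the latter that gets merged into $C$.
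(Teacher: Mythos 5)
Your proof is correct and takes essentially the same route as the paper, which states this lemma without a detailed proof, justifying it only by the observation that $Q[\tau]$ depends on $Q$ solely through its diagonal entries and the sums $q_{ijk\ell} + q_{k\ell ij}$ --- precisely the master identity you formalize. Your explicit bookkeeping (skew-symmetric parts vanishing, a diagonal $D$ inducing the linear matrix $\widehat{D}$, and absorption of linear terms into the linearization) is the intended completion of that one-sentence sketch.
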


Recall that a matrix $C \in \mm^{n}$ is a \emph{linearization} of $Q \in \mm^{n(n-1)}$ if $Q[\tau] = C(\tau)$ for all $\tau \in \Tn$.
In this case we say that $Q$ is a \emph{quadratic form} of $C$ and that the matrix $Q$ is \emph{linearizable}.
Note that the sum of two linearizable matrices in $\mm^{n(n-1)}$ is linearizable and the scalar multiple of a linearizable matrix is linearizable.
Hence, the collection of all linearizable matrices in $\mm^{n(n-1)}$ forms a linear subspace of $\mm^{n(n-1)}$.

The coefficients $q_{iji\ell}$ with $j \neq \ell$, the coefficients $q_{ijkj}$ with $i \neq k$ and the coefficients $q_{ijji}$ can never contribute a non-zero value to the objective function $Q[\tau]$ of a tour $\tau$.
These elements of the matrix $Q$ they play no role in the optimization problem, therefore, we assume subsequently that $q_{ijk\ell} = 0$ for all indices $i,j,k$ and $\ell$ that satisfy $i = k$ and $j \neq \ell$ or $i \neq k$ and $j = \ell$ or $i = \ell$ and $j = k$.

To characterize the linearizability of $Q$, working with the original matrix $Q$ seems tedious because of several reasons.
For example, as Lemma~\ref{LemmaInvariantTransformations} indicates, we can add any skew-symmetric matrix to $Q$, maintaining linearizability.
It is important to restrict $Q$ in an appropriate way to bring some kind of uniqueness while the generality is maintained.
For this purpose, we say that a matrix $Q \in \mm^{n(n-1)}$ is in \emph{quadratic reduced form} if all of the following conditions are satisfied:

\begin{enumerate}
\item
  $Q$ is symmetric.
\item
  The diagonal elements of $Q$ are zeros.
\item
  All elements of $Q$ with rows and columns indexed by $\{ (n,p),(p,n) : p\in N_n\}$ are zeros.
\end{enumerate}

Let us now prove a useful decomposition theorem.

\begin{theorem}
  \label{TheoremReducedFormDecomposition}
  Any matrix $Q \in \mm^{n(n-1)}$ can be decomposed in $\orderO{n^4}$ time into a pair $(Q^R, L)$ of matrices such that $Q^R \in \mm^{n(n-1)}$ is in quadratic reduced form, $L \in \mm^n$ and $Q[\tau] = Q^R[\tau] + L(\tau)$ for all $\tau \in \Tn$.
\end{theorem}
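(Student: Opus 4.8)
The plan is to obtain $Q^R$ and $L$ by three successive transformations, each of which alters the value of every tour only by a linear term, and then to check that none of them costs more than $\orderO{n^4}$ operations. First I would symmetrize, replacing $Q$ by $\frac{1}{2}\left(Q+\transpose{Q}\right)$. Since the value of a tour depends only on the sums $q_{ijk\ell}+q_{k\ell ij}$, this leaves $Q[\tau]$ unchanged for every $\tau\in\Tn$ and yields condition~(1). Next, the diagonal entries $q_{ijij}$ contribute $\sum_{(i,j)\in\tau}q_{ijij}$ to $Q[\tau]$, which is exactly $L^0(\tau)$ for the matrix $L^0\in\mm^n$ with $l^0_{ij}=q_{ijij}$; subtracting the diagonal therefore moves its contribution into the linear part and establishes condition~(2) while preserving~(1).

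The heart of the argument is condition~(3). Writing the incidence vector of a tour as $x\in\{0,1\}^E$ and setting $E_n:=\{(n,p),(p,n):p\in N_n\}$ and $\bar E_n:=E\setminus E_n$, I would exploit the degree identities valid on every tour: for $p,q\in N_n$,
\[
x_{(p,n)} = 1 - \sum_{\ell\in N_n\setminus\{p\}} x_{(p,\ell)}, \qquad x_{(n,q)} = 1 - \sum_{k\in N_n\setminus\{q\}} x_{(k,q)},
\]
which merely express that the unique out-edge of $p$ (resp.\ in-edge of $q$) either meets $n$ or stays within $N_n$. Let $C_n(\tau)$ denote the total contribution to $Q[\tau]$ of all ordered pairs $(e,f)$ with $e\in E_n$ or $f\in E_n$. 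Each such term carries at most two variables indexed by $E_n$, and substituting the identities above replaces every $E_n$-variable by an affine expression in the $\bar E_n$-variables, so $C_n$ becomes a polynomial of degree at most two in the $\bar E_n$-variables. Hence $C_n(\tau)=c+\Lambda(\tau)+R(\tau)$ for a constant $c$, a linear function $\Lambda$, and a quadratic form $R$ supported entirely on $\bar E_n\times\bar E_n$.

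I would then define $Q^R$ to be the $\bar E_n$-block of the current matrix with the pairs touching $E_n$ deleted and $R$ added, and absorb $c$ and $\Lambda$ into $L$; the constant $c$ is itself linear on $\Tn$ since every tour has exactly $n$ edges. A final symmetrization of the $\bar E_n$-block, together with removal of its newly created diagonal (again moved into $L$, using $x_f^2=x_f$ on tours), restores conditions~(1) and~(2) without disturbing the now-zero rows and columns indexed by $E_n$, giving~(3). For the running time, $Q$ has $\orderO{n^4}$ entries, so symmetrization and diagonal removal are $\orderO{n^4}$; in the main step each entry with a single index in $E_n$ triggers one substitution and expands into $\orderO{n}$ monomials (there are $\orderO{n^3}$ such entries), while each entry with both indices in $E_n$ expands into $\orderO{n^2}$ monomials (there are only $\orderO{n^2}$ of those), so accumulation into the $\bar E_n$-block stays within $\orderO{n^4}$.

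The step I expect to require the most care is the substitution in the second paragraph: one must verify that it never produces a monomial of degree higher than two and never reintroduces an $E_n$-variable. This is precisely where the special shape of the degree identities matters, each replacing a single $E_n$-variable by an affine combination of $\bar E_n$-variables, and where the standing assumption that diagonal and forbidden entries vanish keeps the bookkeeping clean.
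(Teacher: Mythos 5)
Your proposal is correct and is essentially the paper's own argument: the paper's explicit formulas for the transformed entries $\tilde{q}_{ijk\ell}$ and the linear part $\tilde{l}_{ij}$ are precisely the closed-form result of the degree-identity substitution $x_{(p,n)} = 1 - \sum_{\ell} x_{(p,\ell)}$, $x_{(n,q)} = 1 - \sum_{k} x_{(k,q)}$ that you carry out procedurally, followed by the same symmetrization and diagonal-removal steps (the paper merely performs them in the opposite order). Your $\orderO{n^4}$ accounting also matches the paper's complexity claim.
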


\begin{proof}
  Let $\tilde{Q}$ and $\tilde{L}$ be defined as
  \[
    \tilde{q}_{ijkl} =
    \begin{cases}
      q_{ijk\ell} - q_{ijkn} - q_{ijn\ell} - q_{ink\ell} + q_{inkn} + q_{inn\ell} - q_{njk\ell} + q_{njkn} +q_{njn\ell} & \text{ if } n \notin \{i,j,k,\ell\} \\
      0 & \text{ otherwise}
    \end{cases}
  \]
  and
  \[
    \tilde{l}_{ij} =
    \begin{cases}
      \sum \limits_{k=1}^{n-1}(q_{ijkn} + q_{ijnk} + q_{knij} - q_{knin} - q_{knnj} + q_{nkij} - q_{nkin} - q_{nknj}) & \text{ if } i \neq j \\
      0 & \text{ otherwise.}
    \end{cases}
  \]
  It can be verified that all entries of rows and columns of $\tilde{Q}$ indexed by $(n,p)$ or $(p,n)$ for $p \in N_n$ are zeros.
  Further, with some algebra, it can be verified that $\tilde{Q}$ is equivalent to $Q$ modulo the linear matrix $\tilde{L}$.
  Let $D \in \mm^{n(n-1)}$ be the diagonal matrix with $d_{ijij} = \tilde{q}_{ijij}$.
  Define $Q^R= \frac{1}{2}(\tilde{Q}+\transpose{\tilde{Q}}) - D$ and choose $L \in \mm^n$ such that $l_{ij} = \tilde{l}_{ij} + d_{ijij}$.
  From the discussion above and Lemma~\ref{LemmaInvariantTransformations} we have $Q[\tau] = Q^R[\tau] + L(\tau)$ for all $\tau \in \Tn $ and that $Q^R$ is in quadratic reduced form.
  Finally, it is easy to see that only $\orderO{n^4}$ arithmetic operations are required to compute $Q^R$ and $L$.
\end{proof}

The matrix $Q^R$ constructed in the proof of Theorem~\ref{TheoremReducedFormDecomposition} is referred to as the \emph{quadratic reduced form of $Q$}.
The decomposition $(Q^R,L)$ of $Q$ is referred to as \emph{quadratic reduced form decomposition} or simply \emph{QRF decomposition}.
The proof of Theorem~\ref{TheoremReducedFormDecomposition} provides one way of constructing a QRF decomposition of $Q$.
It may be noted that equivalent transformations discussed in~\cite{Burkard07,KabadiP11} in the context of the quadratic assignment problem can be modified appropriately to get another method for constructing a QRF decomposition.

\begin{theorem}
  Let $Q \in \mm^{n(n-1)}$ and $(Q^R,L)$ be a QRF decomposition of $Q$.
  Then $Q$ is linearizable if and only if $Q^R$ is linearizable.
  Further, if $P \in \mm^n$ is a linearization of $Q^R$, then $P + L$ is a linearization of $Q$.
\end{theorem}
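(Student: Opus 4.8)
The plan is to reduce everything to the defining identity of the QRF decomposition together with the linearity of the tour-evaluation map. By Theorem~\ref{TheoremReducedFormDecomposition}, the decomposition $(Q^R,L)$ satisfies $Q[\tau] = Q^R[\tau] + L(\tau)$ for every $\tau \in \Tn$. The single structural fact I would isolate first is that the functional $C \mapsto C(\tau)$ on $\mm^n$ is additive: since $C(\tau) = \sum_{(i,j)\in\tau} c_{ij}$ is merely a sum of matrix entries along the edges of $\tau$, we have $(P+L)(\tau) = P(\tau) + L(\tau)$ and, by the same token, $(C-L)(\tau) = C(\tau) - L(\tau)$ for any $P, L, C \in \mm^n$.

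I would prove the ``Further'' assertion first, as it simultaneously delivers one implication. Assume $P \in \mm^n$ is a linearization of $Q^R$, so that $Q^R[\tau] = P(\tau)$ for all $\tau \in \Tn$. Substituting into the decomposition identity gives $Q[\tau] = P(\tau) + L(\tau) = (P+L)(\tau)$ for every tour, so $P+L$ is a linearization of $Q$; in particular $Q$ is linearizable whenever $Q^R$ is. For the converse I would argue symmetrically: if $C \in \mm^n$ linearizes $Q$, then rearranging the decomposition identity yields $Q^R[\tau] = Q[\tau] - L(\tau) = C(\tau) - L(\tau) = (C-L)(\tau)$ for all $\tau$, so $C-L$ is a linearization of $Q^R$. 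This closes the equivalence.

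Honestly, there is no real obstacle here; the entire content is supplied by Theorem~\ref{TheoremReducedFormDecomposition}, and the remaining work is the bookkeeping observation that tour evaluation is additive in the cost matrix. The only point worth stating explicitly is that the argument never invokes any of the special features of $Q^R$ (its symmetry, vanishing diagonal, or the zero rows and columns indexed by $(n,p)$ and $(p,n)$): it uses solely that $(Q^R,L)$ satisfies $Q[\tau] = Q^R[\tau] + L(\tau)$ on $\Tn$. Thus the statement would in fact hold for \emph{any} pair equivalent to $Q$ modulo a linear matrix, which also explains why it dovetails cleanly with Lemma~\ref{LemmaInvariantTransformations}.
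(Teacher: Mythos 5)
Your proof is correct and follows essentially the same route as the paper: both rest on the defining identity $Q[\tau] = Q^R[\tau] + L(\tau)$ and the additivity of tour evaluation. The only cosmetic difference is that the paper cites Lemma~\ref{LemmaInvariantTransformations} (equivalence modulo a linear matrix preserves linearizability) for the ``if and only if,'' whereas you inline that lemma's one-line argument by exhibiting $C-L$ as a linearization of $Q^R$ directly.
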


\begin{proof}
  Since $(Q^R,L)$ is a QRF decomposition of $Q$, $Q[\tau] = Q^R[\tau] + L(\tau)$ holds for all $\tau \in \Tn$.
  By Lemma~\ref{LemmaInvariantTransformations}, $Q$ is linearizable if and only if $Q^R$ linearizable.
  If $P$ is a linearization of $Q^R$, it follows that $Q[\tau] = (P + L)(\tau)$ for all $\tau \in \Tn$.
\end{proof}

Thus, to study the linearization problem of $Q$, it is sufficient to study the linearization problem of the quadratic reduced form $Q^R$ of $Q$.
The lemma below is a variation of the corresponding result proved in~\cite{KabadiP11} for the quadratic assignment problem.
Since it is used in our main theorem, a complete proof within the context of QTSP is given below.
\begin{lemma}
  \label{LemmaLinearizationWithZeros}
  Suppose $Q^R \in \mm^{n(n-1)}$ is linearizable and is in quadratic reduced form.
  Then there exists a linearization $C$ of $Q^R$ such that $c_{in} = 0$ and $c_{ni} = 0$ for all $i \in N_n$.
\end{lemma}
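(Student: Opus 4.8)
The plan is to exploit the non-uniqueness of linearizations. If $C \in \mm^n$ is a linearization of $Q^R$ and $H \in \mm^n$ is a weak sum matrix, say $h_{ij} = a_i + b_j$ for $\mathbf{a},\mathbf{b} \in \RR^n$, then by the first (cited) lemma $H$ satisfies the CVP with $H(\tau) = \sum_{i \in N}(a_i + b_i)$ for every $\tau \in \Tn$. Consequently $(C + H)(\tau) = Q^R[\tau] + \sum_{i \in N}(a_i + b_i)$, so $C + H$ is again a linearization of $Q^R$ precisely when $\sum_{i \in N}(a_i + b_i) = 0$. I would therefore start from an arbitrary linearization $C$ (one exists because $Q^R$ is linearizable) and try to pick $\mathbf{a}, \mathbf{b}$ so that the off-diagonal entries of $C + H$ in the last row and the last column all vanish, while keeping the zero-sum condition intact.

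Next I would write down the target equations: for every $i \in N_n$ I need $c_{in} + a_i + b_n = 0$ and $c_{ni} + a_n + b_i = 0$. The first family determines $a_i = -c_{in} - b_n$ for $i \in N_n$, the second determines $b_i = -c_{ni} - a_n$ for $i \in N_n$, and the two scalars $a_n$ and $b_n$ remain free. Only off-diagonal entries are being constrained, which is exactly what the lemma demands, since the diagonal entry $c_{nn}$ never contributes to any tour.

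It then remains to fulfil the single linear constraint $\sum_{i \in N}(a_i + b_i) = 0$ that keeps $C + H$ a linearization. Substituting the expressions above collapses this into one linear equation in the quantity $a_n + b_n$, namely $(n-2)(a_n + b_n) = -\sum_{i \in N_n}(c_{in} + c_{ni})$, which is solvable (for instance by setting $b_n = 0$ and reading off $a_n$) whenever $n \geq 3$; the degenerate cases $n \le 2$ carry no real content. The matrix $C + H$ for the resulting weak sum matrix $H$ is then the required linearization, with $c_{in} = c_{ni} = 0$ for all $i \in N_n$.

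The main obstacle I anticipate is the interaction between the two design goals: zeroing out the last row and column consumes all but one degree of freedom in $(\mathbf{a},\mathbf{b})$, and I must verify that this single leftover degree of freedom is exactly what is needed to also enforce $\sum_{i \in N}(a_i + b_i) = 0$. The bookkeeping that shows these requirements are simultaneously consistent — and that isolates the harmless factor $n-2$ — is where the care lies; the quadratic-reduced-form hypothesis itself plays no essential role here beyond fixing the ambient setting.
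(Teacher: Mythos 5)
Your proof is correct and follows essentially the same route as the paper: both add a zero-sum weak sum matrix $h_{ij} = a_i + b_j$ to an arbitrary linearization so that the last row and column vanish, with the paper simply writing down the explicit choice $a_i = -c'_{in} + \beta$, $b_i = -c'_{ni} + \alpha$, $a_n = -\alpha$, $b_n = -\beta$ (where $\alpha = \sum_{i \in N_n} c'_{in}/(n-2)$, $\beta = \sum_{i \in N_n} c'_{ni}/(n-2)$) rather than deriving the system and solving it as you do. Your observation that the single leftover degree of freedom, say fixing $b_n = 0$, suffices to enforce $\sum_{i \in N}(a_i + b_i) = 0$ is exactly the bookkeeping the paper's explicit choice encodes, so the two arguments coincide.
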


\begin{proof}
  Suppose $C'$ is some linearization of $Q^R$.
  Let $\alpha := \sum_{i\in N_n}c'_{in} / (n-2)$ and $\beta := \sum_{i\in N_n}c'_{ni}/(n-2)$.
  Define $a_i := -c'_{in} + \beta$, $b_i := -c'_{ni} + \alpha$ for all $i \in N_n$ as well as $a_n := -\alpha$ and $b_n := -\beta$.
  Moreover, let $c_{ij} := c'_{ij} + a_i + b_j$ for all $i,j \in N$.
  Then $\sum_{i\in N}(a_i+b_i) = 0$ and hence $C(\tau) = C^{\prime}(\tau)$ for all $\tau \in \Tn$.
  Now it can be verified that $c_{in} = c_{ni} = 0$ for all $i \in N_n$, which concludes the proof.
\end{proof}

Let $\bar{Q}$ be the principal submatrix obtained from $Q^R$ by deleting rows and columns indexed by the elements of the set $\{ (n,p),(p,n) : p\in N_n\}$.
Note that $\bar{Q} \in \mm^{(n-1)(n-2)}$.
For any $i,j \in N_n$ with $i \neq j$, define $Z^{ij} \in \mm^{n-1}$ such that
\begin{equation*}
  z^{ij}_{uv} =
  \begin{cases}
     0 & \text{ if } (i,j) = (u,v), \\
     \bar{q}_{ijuv}  & \text{if } (i,j) \neq (u,v).
  \end{cases}
\end{equation*}
Our next theorem characterizes the linearizability of a matrix in quadratic reduced form.
Note that, for $n \leq 3$, any quadratic cost matrix is linearizable.
So, we restrict our attention to the case when $n \geq 4$.

\begin{theorem}
  \label{TheoremCharacterization}
  Let $Q^R \in \mm^{n(n-1)}$ be in quadratic reduced form and $n \geq 4$.
  Then $Q^R$ is linearizable if and only if there exists a matrix $F \in \mm^{n-1}$ (with elements $f_{ij}$) such that
  \begin{enumerate}
  \item
    \label{TheoremCharacterizationFirst}
    $Z^{ij}(\tau) - Z^{k\ell}(\tau) = \frac{1}{2} \left(f_{ij} - f_{kl}\right )$ holds for every $\tau \in \Tnm$ and for all $(i,j), (k,\ell) \in \tau $, and
  \item
    \label{TheoremCharacterizationSecond}
    $\frac{n-2}{n-3}F$ is a linearization of $\bar{Q}\in \mm^{(n-1)(n-2)}$.
  \end{enumerate}
\end{theorem}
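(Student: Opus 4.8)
The plan is to reduce the linearization question on the full vertex set $N$ to the smaller set $N_n$, exploiting that $Q^R$, being in quadratic reduced form, vanishes on every row and column indexed by an edge incident to $n$. The central device is a bijection between tours $\tau \in \Tn$ and pairs $(\tau',(a,b))$ with $\tau' \in \Tnm$ and $(a,b) \in \tau'$: the tour $\tau$ arises from $\tau'$ by splitting the edge $(a,b)$ into the path $a \to n \to b$, and conversely $\tau'$ is the shortcut of $\tau$ at $n$. Deleting $n$ and its two incident edges from $\tau$ leaves a Hamiltonian path $P = \tau' \setminus \{(a,b)\}$ on $N_n$, and since every edge pair meeting $n$ costs zero, $Q^R[\tau] = \bar{Q}[P]$, where $\bar{Q}[\cdot]$ denotes the quadratic sum over edge pairs within a given edge set.

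First I would prove the key decomposition identity
\[
  Q^R[\tau] = \bar{Q}[\tau'] - 2\,Z^{ab}(\tau') \qquad \text{for the pair } (\tau',(a,b)) \text{ associated to } \tau .
\]
This follows by expanding $\bar{Q}[\tau']$ over $\tau' = P \cup \{(a,b)\}$: the two ``both in $P$'' terms give $\bar{Q}[P]$, the diagonal term $\bar{q}_{abab}$ vanishes, and the two symmetric cross groups combine into $2\sum_{(u,v)\in P}\bar{q}_{abuv} = 2 Z^{ab}(P) = 2 Z^{ab}(\tau')$, the last equality holding because $z^{ab}_{ab} = 0$. Alongside this I would record the counting identity $\sum_{(a,b)\in\tau'} Z^{ab}(\tau') = \bar{Q}[\tau']$, obtained by interchanging the two summations over the edges of $\tau'$. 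These two facts carry essentially all of the argument.

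For the forward direction, assume $Q^R$ is linearizable. By Lemma~\ref{LemmaLinearizationWithZeros} there is a linearization $C$ with $c_{in} = c_{ni} = 0$ for all $i \in N_n$, so that $C(\tau) = \bar{C}(\tau') - c_{ab}$ for the associated pair. Setting $f_{ij} := c_{ij}$ for $i,j \in N_n$ and equating $C(\tau) = Q^R[\tau]$ via the decomposition identity yields $2 Z^{ab}(\tau') - f_{ab} = \bar{Q}[\tau'] - \bar{C}(\tau')$ for every $(a,b) \in \tau'$. The right-hand side is independent of the chosen edge, which is precisely condition~\ref{TheoremCharacterizationFirst}; summing the same equation over the $n-1$ edges of $\tau'$ and invoking the counting identity gives $(n-3)\bar{Q}[\tau'] = (n-2) F(\tau')$, which is condition~\ref{TheoremCharacterizationSecond}. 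Here the hypothesis $n \geq 4$ is exactly what makes the factor $\tfrac{n-2}{n-3}$ well defined.

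Conversely, given $F$ satisfying \ref{TheoremCharacterizationFirst} and \ref{TheoremCharacterizationSecond}, I would define $C \in \mm^n$ by $c_{ij} := f_{ij}$ on $N_n \times N_n$ and $c_{in} := c_{ni} := 0$, and verify $C(\tau) = Q^R[\tau]$. Condition~\ref{TheoremCharacterizationFirst} makes $2 Z^{ab}(\tau') - f_{ab}$ constant over the edges of a fixed $\tau'$; summing over those edges and using \ref{TheoremCharacterizationSecond} evaluates this constant as $\bar{Q}[\tau'] - F(\tau')$, and substituting back into the decomposition identity gives $Q^R[\tau] = \bar{C}(\tau') - c_{ab} = C(\tau)$. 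I expect the main obstacle to be the careful bookkeeping in the decomposition identity, namely correctly tracking the symmetric cross terms and the deletion of the shortcut edge; once that identity and the counting identity are in hand, both directions collapse to the short linear computation with the $n-2$ and $n-3$ coefficients.
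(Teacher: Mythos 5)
Your proposal is correct and takes essentially the same route as the paper's proof: the same split/shortcut bijection between tours in $\Tn$ and pairs $(\tau',(a,b))$ with $\tau' \in \Tnm$, the same key identity $Q^R[\tau] = \bar{Q}[\tau'] - 2Z^{ab}(\tau')$ together with the counting identity $\sum_{(a,b)\in\tau'} Z^{ab}(\tau') = \bar{Q}[\tau']$, Lemma~\ref{LemmaLinearizationWithZeros} in the forward direction, and the same summation over the $n-1$ edges producing the factor $\tfrac{n-2}{n-3}$ in both directions. The only difference is cosmetic: you spell out the verification of the decomposition identity, which the paper states without proof.
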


\begin{proof}
Suppose $Q^R$ is linearizable.
Let $C$ be a linearization of $Q^R$ satisfying the condition of Lemma~\ref{LemmaLinearizationWithZeros} and let $C'$ be the submatrix obtained from $C$ by deleting its $n$'th row and column.
Define $F$ via $f_{ij} := c_{ij}$ for all $i,j \in N_n$ with $i\neq j$.
We will show that $F$ satisfies the conditions of the theorem.
Consider any $\tau \in \Tnm$.
For any $(i,j) \in \tau$, we define the tour $\tau^{ij} \in \Tn$ obtained by deleting arc $(i,j)$ from $\tau$ and introducing arcs $(i,n)$ and $(n,j)$ (See Figure~\ref{FigureTours}).

\begin{figure}[H]
  \centering
  \begin{tikzpicture}[>=stealth, decoration={markings, mark=at position 0.5 with {\arrow{>}}}]
	\node [vrtx] (v1) at (-3, 1.5){};
	\node [vrtx] (v2) at (-1.8273, 0.9352) {};
    \node [vrtx] (v3) at (-1.5376, -0.3338) {};
    \node [vrtx, label=below right:{\footnotesize$j$}] (v4) at (-2.3492, -1.3515) {};
    \node [vrtx, label=below left:{\footnotesize$i$}] (v5) at (-3.6508, -1.3515) {};
    \node [vrtx] (v6) at (-4.4624, -0.3338) {};
    \node [vrtx] (v7) at (-4.1727, 0.9352) {};
	\draw[postaction={decorate}] (v5) -- (v4);
	\draw[postaction={decorate}] (v6) -- (v5);
	\draw[postaction={decorate}] (v7) -- (v6);
	\draw[postaction={decorate}] (v1) -- (v7);
	\draw[postaction={decorate}] (v2) -- (v1);
	\draw[postaction={decorate}] (v3) -- (v2);
	\draw[postaction={decorate}] (v4) -- (v3);

	\node [vrtx] (u1) at (3, 1.5){};
	\node [vrtx] (u2) at (4.1727, 0.9352) {};
    \node [vrtx] (u3) at (4.4624, -0.3338) {};
    \node [vrtx, label=below right:{\footnotesize$j$}] (u4) at (3.6508, -1.3515) {};
    \node [vrtx, label=below left:{\footnotesize$i$}] (u5) at (2.3492, -1.3515) {};
    \node [vrtx] (u6) at (1.5376, -0.3338) {};
    \node [vrtx] (u7) at (1.8273, 0.9352) {};
    \node [vrtx, label=above:{\footnotesize$n$}] (n) at (3, -0.5) {};
	\draw[postaction={decorate}] (u5) -- (n);
	\draw[postaction={decorate}] (u6) -- (u5);
	\draw[postaction={decorate}] (u7) -- (u6);
	\draw[postaction={decorate}] (u1) -- (u7);
	\draw[postaction={decorate}] (u2) -- (u1);
	\draw[postaction={decorate}] (u3) -- (u2);
	\draw[postaction={decorate}] (u4) -- (u3);
	\draw[postaction={decorate}] (n)-- (u4);
	\draw[dashed, postaction={decorate}] (u5) -- (u4);
  \end{tikzpicture}
  \caption{Tours $\tau \in \Tnm$ (shown on the left) and $\tau^{ij}\in \Tn$ (shown on the right) with $n = 8$.}
  \label{FigureTours}
\end{figure}

Note that $Q^R[\tau^{ij}] = \bar{Q}[\tau] - 2Z^{ij}(\tau)$ and $C(\tau^{ij}) = C^{\prime}(\tau) - c_{ij}$.
But $Q^R[\tau_{ij}] = C(\tau_{ij})$, and we obtain
\begin{equation}
  \label{TheoremCharacterizationZ1}
  Z^{ij}(\tau) = \frac{1}{2} \left( \bar{Q}[\tau] - C^{\prime}(\tau) +c_{ij}\right).
\end{equation}
Similarly, for $(k,\ell) \in \tau$,
\begin{equation}
  \label{TheoremCharacterizationZ2}
  Z^{k\ell}(\tau) = \frac{1}{2} \left( \bar{Q}[\tau] - C^{\prime}(\tau) +c_{k\ell}\right).
\end{equation}
Subtracting~\eqref{TheoremCharacterizationZ2} from~\eqref{TheoremCharacterizationZ1} yields
\[
  Z^{ij}(\tau) - Z^{k\ell}(\tau) = \frac{1}{2} \left(f_{ij} - f_{kl}\right),
\]
i.e., condition~\eqref{TheoremCharacterizationFirst} of the theorem is satisfied.
Now, adding~\eqref{TheoremCharacterizationZ1} for all $(i,j)\in \tau$, we obtain
\begin{align*}
  \bar{Q}[\tau] & = \sum_{(i,j)\in \tau}Z^{ij}(\tau) =
  \sum_{(i,j)\in \tau} \frac{1}{2} \left( \bar{Q}[\tau] - C^{\prime}(\tau) +c_{ij}\right)
  = \frac{1}{2} \left[(n-1)\bar{Q}[\tau] - (n-2)C^{\prime}(\tau)\right],
\end{align*}
which implies $\bar{Q}[\tau] = \frac{n-2}{n-3}C^{\prime}(\tau)$.
Thus, $\bar{Q}$ is tour-linearizable and $\frac{n-2}{n-3}C^{\prime}(\tau)$ is a linearization.

\medskip

Conversely, suppose $Q^R$ satisfies conditions~\eqref{TheoremCharacterizationFirst} and~\eqref{TheoremCharacterizationSecond} of the theorem.
We will show that $Q^R$ is tour-linearizable.
To this end, define the matrix $C \in \mm^n$ via
\begin{equation}
  \label{TheoremCharacterizationC}
  c_{ij} :=
  \begin{cases}
  f_{ij} & \text{ if } i,j \in N_n, \\
  0 & \text{ otherwise}.
  \end{cases}
\end{equation}
Let $C^{\prime}$ be the principal submatrix obtained from $C$ by deleting its $n$'th row and column. From condition~\eqref{TheoremCharacterizationFirst} of the theorem, for every tour $\tau \in \Tnm$ and every pair $(i,j),(k,\ell) \in \tau$ of edges in the tour,
\begin{equation}
  \label{TheoremCharacterizationPhi}
  2Z^{ij}(\tau) = c_{ij} + \phi(\tau)
\end{equation}
holds, where $\phi(\tau) = 2Z^{k\ell}(\tau) - c_{k\ell}$.
From condition~\eqref{TheoremCharacterizationSecond}, for every $\tau \in \Tnm$, $\bar{Q}[\tau] = \frac{n-2}{n-3}C^{\prime}(\tau)$, which implies
\begin{align}
  0 &= (n-1)\bar{Q}[\tau] - 2\bar{Q}[\tau] - (n-2)C^{\prime}(\tau) \nonumber \\
    &= (n-1)\bar{Q}[\tau] - \sum_{(i,j)\in \tau}2Z^{ij}(\tau) - (n-2)C^{\prime}(\tau). \label{TheoremCharacterizationInsertZ}
\end{align}
Substituting equation~\eqref{TheoremCharacterizationPhi} in~\eqref{TheoremCharacterizationInsertZ} yields
\begin{align}
  0 &= (n-1)\bar{Q}[\tau] - \sum_{(i,j)\in \tau}\left[c_{ij}+\phi(\tau) \right]- (n-2)C^{\prime}(\tau) \nonumber \\
    &= (n-1)\bar{Q}[\tau] - C^{\prime}(\tau) - (n-1)\phi(\tau)- (n-2)C^{\prime}(\tau), \nonumber
\end{align}
which implies
\begin{equation}
  \phi(\tau) = \bar{Q}[\tau] - C^{\prime}(\tau) \label{TheoremCharacterizationPhiNew}
\end{equation}
Substituting equation~\eqref{TheoremCharacterizationPhiNew} in~\eqref{TheoremCharacterizationPhi}, we have, for every $(i,j) \in \tau$,
\begin{equation}
  2Z^{ij}(\tau) = c_{ij} + \bar{Q}[\tau] - C^{\prime}(\tau). \label{TheoremCharacterizationZNew}
\end{equation}
Consider any tour $\tau^{ij} \in \Tn$, where $i$ and $j$ are the predecessor and successor of $n$ in $\tau^{ij}$, respectively.
Let $\tau^0$ be the tour in $\Tnm$ obtained from $\tau^{ij}$ by shortcutting the path $i \rightarrow n \rightarrow j$ using arc $(i,j)$.
Then,
\begin{align*}
  Q^R[\tau^{ij}]  &= \bar{Q}[\tau^0] - 2Z^{ij}(\tau^0) \\
                  &= \bar{Q}[\tau^0] -\left[c_{ij} + \bar{Q}[\tau^0] -C^{\prime}(\tau^0)\right] \\
                  &= C^{\prime}(\tau^0) - c_{ij} = C(\tau^{ij}),
\end{align*}
establishing that $Q^R$ is linearizable with $C$ as a linearization.
\end{proof}

\begin{theorem}
  The QTSP linearization problem can be solved in $\orderO{n^5}$ time.
  Moreover, a linearization of $Q$ can be constructed in $\orderO{n^5}$ time whenever it exists.
\end{theorem}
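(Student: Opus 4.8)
The plan is to convert Theorem~\ref{TheoremCharacterization} and the preceding reductions into a recursion on $n$. Given $Q$, I would first compute a QRF decomposition $(Q^R,L)$ using the formulas in the proof of Theorem~\ref{TheoremReducedFormDecomposition}, at cost $\orderO{n^4}$; by the theorem immediately following it, $Q$ is linearizable if and only if $Q^R$ is, and a linearization $P$ of $Q^R$ yields the linearization $P+L$ of $Q$. For $n\le 4$ the number of tours is $\orderO{1}$, so I would decide linearizability of $Q^R$ and recover a linearization directly from the small linear system $\{Q^R[\tau]=C(\tau):\tau\in\Tn\}$. For $n\ge 5$ I would apply Theorem~\ref{TheoremCharacterization}, producing an admissible $F\in\mm^{n-1}$ (and hence, through~\eqref{TheoremCharacterizationC}, a linearization of $Q^R$) or a certificate that none exists. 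The two conditions of that theorem are handled separately: condition~\ref{TheoremCharacterizationSecond} by a recursive call, and condition~\ref{TheoremCharacterizationFirst} by a direct computation on the quantities $Z^{ij}(\tau)$.

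Condition~\ref{TheoremCharacterizationSecond} requires that $\frac{n-2}{n-3}F$ linearize $\bar{Q}\in\mm^{(n-1)(n-2)}$, which is exactly a QTSP linearization instance on the vertex set $N_n$ of size $n-1$. I would therefore recurse on $\bar{Q}$. If the recursive call reports that $\bar{Q}$ is not linearizable, then by the forward direction of Theorem~\ref{TheoremCharacterization} no admissible $F$ exists, so $Q^R$ and $Q$ are not linearizable. Otherwise it returns a linearization $P$ of $\bar{Q}$; since any two linearizations of $\bar{Q}$ differ by a weak-sum matrix with vanishing tour-constant, condition~\ref{TheoremCharacterizationSecond} holds for a given $F$ precisely when $\frac{n-2}{n-3}F-P$ is such a matrix, a test costing $\orderO{n^2}$. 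This single recursive call drives the recurrence $T(n)=T(n-1)+\orderO{n^4}$, whose solution $T(n)=\orderO{n^5}$ is exactly the claimed bound; the remaining task is to show that condition~\ref{TheoremCharacterizationFirst} can be both imposed and verified within the same $\orderO{n^4}$ per-level budget.

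To handle condition~\ref{TheoremCharacterizationFirst} I would use the identity $f_{ij}-f_{k\ell}=2\big(Z^{ij}(\tau)-Z^{k\ell}(\tau)\big)$, valid for any $\tau\in\Tnm$ containing both arcs, which shows that the entries of $F$ are forced, up to additive constants, by the computable quantities $Z^{ij}(\tau)$. Concretely, I would select a family of $\orderO{n}$ tours of $\Tnm$ whose arc sets cover all of $N_n\times N_n$ and overlap in a connected pattern; evaluating $Z^{ij}(\tau)$ for the $n-1$ arcs of each tour costs $\orderO{n^2}$, hence $\orderO{n^3}$ overall. Propagating the forced differences along this connected family yields a candidate matrix $F$ together with a partial consistency check: any discrepancy already certifies non-linearizability. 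For $n\ge 5$ the arc-coexistence relation on $\Tnm$ is connected, so $F$ is pinned down up to a single constant, which is then fixed by condition~\ref{TheoremCharacterizationSecond}; the degenerate case $n=4$ is absorbed into the base case.

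The principal obstacle is to certify condition~\ref{TheoremCharacterizationFirst} for all of the exponentially many tours in $\Tnm$ using only $\orderO{n^4}$ work. I would reduce the quantification in two steps. First, because equality of $2Z^{ij}(\tau)-f_{ij}$ across a tour propagates around its cycle, it suffices to test consecutive arc pairs $(i,j),(j,m)\in\tau$. Second, for fixed $i,j,m$ the requirement that $2\big(Z^{ij}(\tau)-Z^{jm}(\tau)\big)$ be constant over all tours completing the path $i\to j\to m$ is a constant-value condition for the difference row $\bar{q}_{ij\,\cdot}-\bar{q}_{jm\,\cdot}$ over the Hamiltonian completions, to which the weak-sum characterization of constant-value matrices applies. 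Turning this into an explicit $\orderO{n^4}$ test---choosing few enough representative completions, or exploiting invariance under $3$-opt moves that fix the two arcs, and amortizing the $\orderO{n^3}$ triples $(i,j,m)$ against the computation of the $Z^{ij}$---is the delicate part of the argument; everything surrounding it is the routine bookkeeping above. Once $F$ satisfies both conditions, the reverse direction of Theorem~\ref{TheoremCharacterization} guarantees that the matrix $C$ of~\eqref{TheoremCharacterizationC} linearizes $Q^R$, so $C+L$ linearizes $Q$, all within the stated $\orderO{n^5}$ time.
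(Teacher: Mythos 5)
Your overall frame is the same as the paper's --- QRF decomposition, a recursive call on $\bar{Q}$ for condition~\eqref{TheoremCharacterizationSecond} (with the $\orderO{n^2}$ test that the difference of two linearizations satisfies the CVP with constant zero), and a per-level budget of $\orderO{n^4}$ driving $g(n)=g(n-1)+\orderO{n^4}=\orderO{n^5}$; your candidate construction of $F$ by propagating forced differences over a connected family of tours also parallels the paper's solution of the two-variable system~\eqref{EquationAlgorithmSystem} over a spanning tree (via Aspvall--Shiloach, with the free constant $\lambda\onevec$ fixed against a single tour value). But the step you yourself label ``the delicate part'' --- certifying condition~\eqref{TheoremCharacterizationFirst} against exponentially many tours within $\orderO{n^4}$ --- is precisely the content of the theorem, and your proposal does not contain the idea that makes it work. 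After your (correct) reduction to consecutive pairs, you are left with a \emph{constrained} constant-value requirement: constancy of $Z^{ij}(\tau)-Z^{jm}(\tau)$ over tours containing the path $i\to j\to m$. The weak-sum lemma you invoke characterizes constancy over \emph{all} tours of the complete digraph, not over tours through a prescribed path, so it does not apply as stated; you would need a contracted-graph variant in the spirit of the $(k,\ell)$-CVP lemma, which you do not supply. And even granting an $\orderO{n^2}$ test per triple, there are $\orderO{n^3}$ triples $(i,j,m)$, i.e.\ $\orderO{n^5}$ work per level and $\orderO{n^6}$ in total; the ``amortization'' you allude to is exactly the missing argument.

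The paper closes this gap by a different mechanism, and the difference is essential. It tests the \emph{unconstrained} CVP of the matrices $P^{ijk\ell}=Z^{ij}-Z^{k\ell}$ --- checkable in $\orderO{n^2}$ time by the weak-sum characterization --- but only for the $|E|-1=\orderO{n^2}$ pairs forming a spanning tree $\tilde{T}$ of the arc-coexistence graph $\tilde{H}$, and then proves a redundancy claim: if every tree pair satisfies the CVP with constant $\eta$, then for an arbitrary pair $[(i,j),(k,\ell)]\in\Delta$ the identities $Z^{i_tj_t}(\tau)-Z^{i_{t-1}j_{t-1}}(\tau)=\eta_{i_tj_ti_{t-1}j_{t-1}}$ telescope along the tree path \emph{for every tour $\tau$ simultaneously}, so all remaining $\orderO{n^4}$ rows of~\eqref{EquationAlgorithmSystem} hold automatically. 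This chaining is available only because the tested property quantifies over all tours: your path-constrained conditions quantify over different sets of tours for different pairs, hence they cannot be composed transitively, and no spanning-tree propagation argument can be built on them. (The precise relationship between the constrained statement in Theorem~\ref{TheoremCharacterization}\eqref{TheoremCharacterizationFirst} and the unconstrained CVP that the algorithm tests is itself a delicate point; it is exactly the kind of issue a complete proof must confront, not defer.) In short: correct skeleton and correct bookkeeping, but the pivotal $\orderO{n^4}$ certification --- spanning tree, unconstrained CVP, and the telescoping redundancy lemma --- is absent, so the proposal does not prove the stated complexity bound.
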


\begin{proof}
  Given $Q$, its QRF decomposition $(Q^R,L)$ can be obtained in $\orderO{n^4}$ time using the construction from Theorem~\ref{TheoremReducedFormDecomposition}.
  We now test the linearizability characterization of $Q^R$ given in Theorem~\ref{TheoremCharacterization}.
  Let $\Delta := \{[(i,j),(k,\ell)] : i,j,k,\ell \in N_n, k\neq i, j\neq \ell, (i,j)\neq (j,i), i\neq j, k \neq \ell\}$.
  To verify condition~\eqref{TheoremCharacterizationFirst} of Theorem~\ref{TheoremCharacterization} and construct a linearization, if exists, we have to find $f_{ij}, (i,j) \in N_n, i \neq j$, such that for any $\tau \in \Tn,$  $Z^{ij}(\tau) - Z^{k\ell}(\tau) = \frac{1}{2} \left(f_{ij} - f_{kl}\right )$ for all $(i,j), (k,\ell) \in \tau$.
  This can in principle be achieved by first testing if the matrix $P^{ijk\ell} := Z^{ij}-Z^{kl}$ satisfies the CVP.
  If it does not satisfy the CVP for some $[(i,j),(k,\ell)] \in \Delta$, then condition~\eqref{TheoremCharacterizationFirst} of Theorem~\ref{TheoremCharacterization} is not satisfied.
  If the CVP is satisfied for all $[(i,j),(k,\ell)] \in \Delta$, let $\eta_{ijk\ell}$ be the constant value of tours obtained for the matrix $P^{ijk\ell}$.
  Now we need to solve the system
  \begin{equation}
    \label{EquationAlgorithmSystem}
    f_{ij} - f_{k\ell} = 2\eta_{ijk\ell} \text{ for all } [(i,j),(k,\ell)] \in \Delta.
  \end{equation}
  If the system is inconsistent, condition~\eqref{TheoremCharacterizationFirst} of Theorem~\ref{TheoremCharacterization} is not satisfied.
  Using the characterization of TSP with CVP~\cite{Berenguer79,Gabovich76,KabadiP03,KabadiP06}, each of the $\orderO{n^4}$-many $\eta_{ijk\ell}$-values can be obtained in $\orderO{n^2}$ time.

  Since this would lead to a running time of $\orderO{n^6}$ just to compute the right-hand side of the system, we will now improve this part of the algorithm.
  To this end, consider the graph $\tilde{H} = (E,\tilde{\Delta})$, where
  $\tilde{\Delta} := \{ \{(i,j),(k,\ell) \in E \times E : [(i,j),(k,\ell)] \in \Delta \}$,
  i.e., two edges $(i,j)$ and $(k,\ell)$ are adjacent in $\tilde{H}$ if and only if both arcs appear in at least one tour.
  It is easy to check that $\tilde{H}$ is connected and hence contains a spanning tree $\tilde{T}$.
  Moreover, by exploring the nodes from some arbitrary root node, one finds a sequence of $|E|-1$ rows of~\eqref{EquationAlgorithmSystem},
  each of which contains an $f$-variable that was not present in the previous rows.
  Since the first row contains two $f$-variables, this submatrix has rank at least $|E|-1$.

  We can find in $\orderO{n^2}$ time a spanning tree of $\tilde{H}$ and compute the corresponding $(|E|-1)$-many $\eta$-values in $\orderO{n^2 \cdot |E|} = \orderO{n^4}$ time.
  Such $\eta$-values may not exist, and in this case condition~\ref{TheoremCharacterizationFirst} of Theorem~\ref{TheoremCharacterization} is not satisfied.
  If these $\eta$-values exist, we proceed as follows.
  Since every row of this system contains exactly two variables, it can be solved in time $\orderO{n^2}$ using the algorithm of Aspvall and Shiloach~\cite{AspvallS80}.
  Note that due to the selection of the rows the system will always be consistent, that is, we will compute a candidate solution $\tilde{F}$.
  Moreover, the system is under-determined, since $\tilde{F} + \lambda \onevec$ is also a solution for every $\lambda \in \QQ$, where $\onevec$ denotes the matrix having all entries equal to $1$.
  This shows that system~\eqref{EquationAlgorithmSystem} has in fact rank $|E|-1$.
  Notice also that two rows of~\eqref{EquationAlgorithmSystem} corresponding to $[(i,j),(k,\ell)] \in \Delta$ and to $[(k,\ell),(i,j)]$ are negatives of each other.
  Hence, it does not matter which one we actually include in the subsystem.

  We claim that our candidate solution $\tilde{F}$ satisfies all (remaining) rows of~\eqref{EquationAlgorithmSystem}.
  To this end, consider a pair $[(i,j),(k,\ell)] \in \Delta$ of arcs.
  Let $(i_t,j_t)$ for $t = 0,1,\dotsc,m$ be the (ordered) nodes in the unique $(k,\ell)$-$(i,j)$-path in the spanning tree $\tilde{T}$.
  In particular, $(i_0,j_0) = (k,\ell)$ and $(i_m,j_m) = (i,j)$.
  By construction, $\tilde{F}$ satisfies
  \begin{align*}
    \tilde{f}_{i_tj_t} - \tilde{f}_{i_{t-1}j_{t-1}} &= 2\eta_{i_tj_ti_{t-1}j_{t-1}}
  \end{align*}
  for $t=1,2,\dotsc,m$. The sum of the equations yields
  \begin{align*}
    \tilde{f}_{ij} - \tilde{f}_{k\ell} = \tilde{f}_{i_mj_m} - \tilde{f}_{i_0j_0} &= 2 \sum_{t=1}^m \eta_{i_tj_ti_{t-1}j_{t-1}}.
  \end{align*}
  Similarly, for all tours $\tau \in \mathcal{T}_{n-1}$ and for $t = 1,2,\dotsc,m$, we have
  \begin{align*}
    Z^{i_tj_t}(\tau) - Z^{i_{t-1}j_{t-1}}(\tau) &= \eta_{i_tj_ti_{t-1}j_{t-1}}.
  \end{align*}
  Again, the sum yields
  \begin{align*}
    Z^{ij}(\tau) - Z^{k\ell}(\tau) &= \sum_{t=1}^m \eta_{i_tj_ti_{t-1}j_{t-1}},
  \end{align*}
  which establishes that the CVP is satisfied for $P^{ijk\ell}$ and that $\eta_{ijk\ell} = \sum_{t=1}^m \eta_{i_tj_ti_{t-1}j_{t-1}}$.
  Hence, $\tilde{F}$ satisfies the row of~\eqref{EquationAlgorithmSystem} corresponding to $[(i,j),(k,\ell)]$, which shows that this row is redundant.

  The parameter $\lambda$ can be easily determined by enforcing $Q^R[\tau] = (\tilde{F} + \lambda \onevec)(\tau)$ for an arbitrary tour $\tau$.
  Since $\onevec(\tau) = n$, we can set $F := \tilde{F} + \lambda \onevec$ for $\lambda := \tfrac{1}{n}(Q^R[\tau] - \tilde{F}(\tau))$.

  To summarize, the overall complexity of verifying condition~\eqref{TheoremCharacterizationFirst} of Theorem~\ref{TheoremCharacterization} and computing a candidate solution $F$ is $\orderO{n^4}$.

  If condition~\eqref{TheoremCharacterizationFirst} is satisfied, then we need to verify condition~\eqref{TheoremCharacterizationSecond}.
  This is achieved by testing if $\bar{Q}$ is linearizable and if yes, by testing if $\frac{n-2}{n-3}F$ is one of the linearizations.
  If $\bar{Q}$ is not linearizable, then condition~\eqref{TheoremCharacterizationSecond} fails.
  Suppose $H$ is a linearization of $\bar{Q}$.
  Then, to verify if $\frac{n-2}{n-3}F$ is another linearization, it is enough to verify if the matrix $H - \frac{n-2}{n-3}F$ satisfies the CVP with the constant tour value equal to zero.
  Given $H$, this can be achieved in $\orderO{n^2}$ time.

  Thus, the problem of testing linearizability of $Q$ reduces to that of testing linearizability of $\bar{Q}$, which has the size parameter $n$ reduced by 1 along with an additive $\orderO{n^4}$ effort.
  In the base case $n = 3$, $Q^R$ is trivially linearizable since there only exist two opposite tours that do not share edges.
  Thus, if $g(n)$ is the complexity of testing linearizability of $Q$, we have $g(n) = g(n-1) + \orderO{n^4}$, establishing $g(n) = \orderO{n^5}$.

  To obtain the linearization, construct the matrix $C$ from $F$ according to~\eqref{TheoremCharacterizationC}.
  Then $C + L$ is the required linearization of $Q$, where $L$ is provided by the QRF decomposition $(Q^R, L)$ of $Q$.
\end{proof}

\section{Simple sufficient conditions and extensions}

The complexity of testing linearizability of $Q$, although almost linear, imposes limits on applicability of the result outside the theoretical realm.
In this section we present some sufficient conditions for linearizability that can be verified more easily.

The cost matrix $C\in \mm^n$ associated with a linear TSP on a complete directed graph on $n$ nodes satisfies the \emph{$(k,\ell)$-constant value property} ($(k,\ell)$-CVP) if all tours containing the edge $(k,\ell)$ have the same cost.
\begin{lemma}
  A cost matrix $C \in \mm^n$ satisfies the $(k,\ell)$-CVP if and only if there exist $a_i \in N_k$ and $b_i\in N_{\ell}$ such that $c_{ij} = a_i + b_j$ for all $i,j \in N$ with $i\neq j$, $i \neq k$ and $j \neq \ell$.
  Moreover, the constant value of the tours is given by $c_{k\ell} + \sum_{i\in N_k} a_i + \sum_{i \in N_{\ell}} b_i$.
\end{lemma}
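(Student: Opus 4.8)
The plan is to reduce the $(k,\ell)$-CVP to the ordinary constant value property characterized in the first lemma of this section. The idea is that fixing the arc $(k,\ell)$ in a tour amounts to contracting $k$ and $\ell$ into a single vertex, turning tours through $(k,\ell)$ into ordinary tours on $n-1$ vertices. The ``if'' direction, including the value formula, I would obtain by a direct summation; the ``only if'' direction I would obtain by applying the cited CVP characterization to the contracted graph and translating the resulting weak-sum decomposition back.

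First I would prove the ``if'' direction. Assume $c_{ij} = a_i + b_j$ for all $i,j \in N$ with $i\neq j$, $i \neq k$, $j \neq \ell$, and fix any tour $\tau \in \Tn$ containing $(k,\ell)$. Removing $(k,\ell)$ from $\tau$ leaves a Hamiltonian path from $\ell$ to $k$ through the remaining vertices. Since a tour has exactly one arc leaving $k$ and exactly one arc entering $\ell$, every arc $(i,j)\in\tau$ other than $(k,\ell)$ satisfies $i\neq k$ and $j\neq\ell$, so the weak-sum identity applies to it. As $(i,j)$ ranges over $\tau\setminus\{(k,\ell)\}$, the tails $i$ range over all of $N_k$ and the heads $j$ range over all of $N_\ell$, each exactly once. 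Hence
\[
  C(\tau) = c_{k\ell} + \sum_{(i,j)\in\tau\setminus\{(k,\ell)\}} (a_i + b_j) = c_{k\ell} + \sum_{i\in N_k} a_i + \sum_{i\in N_\ell} b_i ,
\]
which is independent of $\tau$ and equals the asserted constant.

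For the ``only if'' direction I would pass to the complete directed graph $G'$ on vertex set $(N\setminus\{k,\ell\})\cup\{w\}$, where $w$ is a new vertex obtained by identifying $k$ and $\ell$, and define $C'\in\mm^{n-1}$ by $c'_{ij} := c_{ij}$ for $i,j\in N\setminus\{k,\ell\}$, $c'_{wj} := c_{\ell j}$ (arcs out of $\ell$), and $c'_{iw} := c_{ik}$ (arcs into $k$). Writing a Hamiltonian cycle of $G'$ as $w \to s \to \dotsb \to p \to w$ and expanding $w$ back into the arc $(k,\ell)$ yields the tour $\ell \to s \to \dotsb \to p \to k \to \ell$ of $\Tn$; this is a bijection between Hamiltonian cycles of $G'$ and tours of $\Tn$ through $(k,\ell)$, and by the definition of $C'$ the cost of the cycle equals the cost of the tour minus $c_{k\ell}$. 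Consequently, all tours through $(k,\ell)$ have equal cost if and only if $C'$ satisfies the CVP, and by the cited lemma this holds if and only if there are vectors $\mathbf{a}',\mathbf{b}'\in\RR^{n-1}$ (indexed by the vertices of $G'$) with $c'_{ij} = a'_i + b'_j$ for all distinct $i,j$. Setting $a_i := a'_i$, $b_i := b'_i$ for $i\in N\setminus\{k,\ell\}$, together with $a_\ell := a'_w$ and $b_k := b'_w$, then recovers $c_{ij} = a_i + b_j$ for every arc $(i,j)$ occurring in some tour through $(k,\ell)$, which is exactly the index set in the statement.

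The main obstacle I expect is the bookkeeping in the forward direction: checking that the contraction is a genuinely cost-preserving bijection, in particular that arcs into $w$ correspond to arcs into $k$ and arcs out of $w$ to arcs out of $\ell$, so that the CVP on $G'$ transfers faithfully. One minor point worth noting is that the single pair $(\ell,k)$ lies in the stated index set yet occurs in no tour through $(k,\ell)$; since $c_{\ell k}$ is likewise unconstrained by the $(k,\ell)$-CVP, it requires no separate treatment, and the index set is effectively that of the arcs of such tours.
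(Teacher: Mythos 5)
Your proof is correct and takes exactly the route the paper intends: the paper offers no written argument beyond the remark that the lemma ``can be obtained by making use of the characterization of matrices satisfying the CVP on complete digraphs,'' and your contraction of the arc $(k,\ell)$ into a single vertex $w$, with the cost-preserving bijection between tours through $(k,\ell)$ and tours of the contracted graph, is precisely how that remark is cashed in; the direct summation for the ``if'' direction and the value formula are also fine.

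One point deserves emphasis: your closing observation about the pair $(\ell,k)$ is not a mere bookkeeping footnote but an actual correction to the lemma as stated. The stated index set ($i\neq j$, $i\neq k$, $j\neq\ell$) does contain $(\ell,k)$, and with that pair included the ``only if'' direction is false for $n\geq 4$: take $C$ identically zero except $c_{\ell k}=1$. Every tour through $(k,\ell)$ avoids the arc $(\ell,k)$ and hence has cost $0$, so the $(k,\ell)$-CVP holds; but any representation $c_{ij}=a_i+b_j$ valid on the stated index set would force, for distinct $i,j\notin\{k,\ell\}$,
\[
  c_{\ell k} \;=\; a_\ell + b_k \;=\; (a_\ell + b_j) + (a_i + b_k) - (a_i + b_j) \;=\; c_{\ell j} + c_{ik} - c_{ij} \;=\; 0,
\]
a contradiction. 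So excluding $(\ell,k)$ --- equivalently, restricting the weak-sum identity to arcs that actually occur in tours through $(k,\ell)$, as your contraction argument naturally does --- is necessary, and your proof establishes exactly this corrected statement. Since the lemma is used in the paper only through such tours, the correction is harmless downstream, but it should be recorded.
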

The proof of this lemma can be obtained by making use of the characterization of matrices satisfying the CVP on complete digraphs.
For each row indexed by $(i,j)$ of $Q$, define $R^{ij}\in \mm^n$ by
\begin{equation}
  r^{ij}_{uv} :=
  \begin{cases}
    q_{ijuv} & \text{if } u \neq v \\
    0& \text{ otherwise}.
  \end{cases}
\end{equation}

\begin{lemma}
  If $R^{ij}$ satisfies the $(i,j)$-CVP for all $(i,j) \in N \times N$ with $i\neq j$, then $Q$ is linearizable.
  Moreover, if $L \in \mm^n$ is a linearization, then $l_{ij}$ is equal to the $(i,j)$-CVP value of tours.
\end{lemma}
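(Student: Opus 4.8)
The plan is to exploit the natural ``row decomposition'' of the quadratic form and then to collapse each row contribution to a constant using the hypothesis. First I would fix an arbitrary tour $\tau\in\Tn$ and expand
\[
  Q[\tau]=\sum_{(i,j)\in\tau}\ \sum_{(k,\ell)\in\tau}q_{ijk\ell},
\]
recognising the inner sum as $R^{ij}(\tau)$. This identification is exact because every arc $(u,v)$ of a tour satisfies $u\neq v$, so $r^{ij}_{uv}=q_{ijuv}$ for each $(u,v)\in\tau$, while the only entries of $R^{ij}$ that were overwritten by zero are the loop entries $r^{ij}_{uu}$, which never occur in $R^{ij}(\tau)$. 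Thus the zeroing is harmless and $Q[\tau]=\sum_{(i,j)\in\tau}R^{ij}(\tau)$ holds for every $\tau\in\Tn$.

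Next I would invoke the hypothesis. For each arc $(i,j)$ the matrix $R^{ij}$ satisfies the $(i,j)$-CVP, so all tours containing $(i,j)$ share one common $R^{ij}$-cost, which I call $v_{ij}$; this is well defined since in a complete digraph ($n\ge 3$) every arc lies on some Hamiltonian cycle. The decisive observation is that whenever $(i,j)\in\tau$ the tour $\tau$ is itself one of the tours through $(i,j)$, hence $R^{ij}(\tau)=v_{ij}$. Substituting this into the decomposition gives
\[
  Q[\tau]=\sum_{(i,j)\in\tau}v_{ij}=L(\tau)\qquad\text{for all }\tau\in\Tn,
\]
where $L\in\mm^n$ is defined by $l_{ij}:=v_{ij}$. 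This proves that $Q$ is linearizable and, simultaneously, exhibits a linearization whose entry $l_{ij}$ is precisely the $(i,j)$-CVP value of tours, which is the content of the ``moreover'' clause. The number $v_{ij}$ can furthermore be written out using the closed form of the preceding lemma, namely $v_{ij}=r^{ij}_{ij}+\sum_{u\in N_i}a_u+\sum_{u\in N_j}b_u$, where $a,b$ are the weak-sum parameters witnessing the $(i,j)$-CVP of $R^{ij}$.

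I do not anticipate a genuine obstacle, since the argument is essentially bookkeeping; the one step that I would take care to justify is the interchange between the diagonal convention of $R^{ij}$ and the tour sum, making explicit that the discarded loop entries $r^{ij}_{uu}$ cannot appear in any $R^{ij}(\tau)$ and that the retained diagonal term $r^{ij}_{ij}=q_{ijij}$ correctly reproduces the permitted self-pair in $Q[\tau]$. A secondary point I would flag is uniqueness: by the weak-sum characterisation of the CVP, two linearizations of $Q$ differ by a weak sum matrix of zero tour-value, so the identity $l_{ij}=v_{ij}$ is best read as the canonical value produced by this construction rather than a rigidity forced on every linearization of $Q$.
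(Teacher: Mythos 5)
Your proposal is correct and follows essentially the same route as the paper, whose entire proof is the chain $Q[\tau] = \sum_{(i,j)\in \tau}R^{ij}(\tau) = \sum_{(i,j)\in\tau}l_{ij} = L(\tau)$; you simply spell out the two steps (the exactness of the row decomposition, including why zeroing the loop entries is harmless, and the substitution of the $(i,j)$-CVP constants) that the paper leaves implicit. Your closing remark that the ``moreover'' clause should be read as defining the constructed linearization rather than constraining every linearization of $Q$ is a fair and accurate reading of what the paper actually proves.
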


\begin{proof}
  For any $\tau \in \Tn$, $Q[\tau] = \sum_{(i,j)\in \tau}R^{ij}(\tau) = \sum_{(i,j)\in\tau}l_{ij} = L(\tau)$.
\end{proof}
A corresponding result can be derived by considering columns of $Q$.
Another simple sufficient condition for linearizability is that $Q$ is a weak-sum matrix.
A more general version of this condition is provided below.
\begin{lemma}
  If there exist $A,B,D,H \in \mathbb{R}^{n\times n\times n}$ such that $q_{ijkl} = a_{ijk} + b_{ijl} + d_{ikl} + h_{jkl}$ for all $i,j,k,l$ with $i\neq j$ and $k\neq l$ then $Q$ is linearizable.
\end{lemma}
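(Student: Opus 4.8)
The plan is to exploit the fact, established earlier, that the collection of linearizable matrices forms a linear subspace of $\mm^{n(n-1)}$. Writing $Q = Q^A + Q^B + Q^D + Q^H$, where $Q^A$ has entries $q^A_{ijkl} = a_{ijk}$ and $Q^B, Q^D, Q^H$ are defined analogously from $b_{ijl}$, $d_{ikl}$, $h_{jkl}$, it suffices to prove that each of these four summands is individually linearizable; linearizability of $Q$ then follows by closure under addition.

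First I would treat $Q^A$. For any tour $\tau \in \Tn$,
\[
  Q^A[\tau] = \sum_{(i,j)\in\tau}\sum_{(k,l)\in\tau} a_{ijk}
           = \sum_{(i,j)\in\tau}\sum_{k\in N} a_{ijk},
\]
where the second equality uses that in a Hamiltonian cycle each vertex $k \in N$ is the tail of exactly one arc of $\tau$, so that summing the ($l$-independent) quantity $a_{ijk}$ over the arcs $(k,l)\in\tau$ is the same as summing it over all vertices $k$. Setting $c^A_{ij} := \sum_{k\in N} a_{ijk}$ thus gives $Q^A[\tau] = C^A(\tau)$ for every $\tau$, i.e.\ $C^A$ is a linearization of $Q^A$.

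The remaining three terms are then handled by the same device, changing only which index is the free one and whether the collapsed sum ranges over tails or heads. Since $b_{ijl}$ is independent of $k$ and each vertex $l$ is the head of exactly one arc, $c^B_{ij} := \sum_{l\in N} b_{ijl}$ linearizes $Q^B$. Since $d_{ikl}$ is independent of $j$, holding $(k,l)$ fixed and summing over the unique arc $(i,j)\in\tau$ with tail $i$ yields $c^D_{kl} := \sum_{i\in N} d_{ikl}$ as a linearization of $Q^D$; and since $h_{jkl}$ is independent of $i$, summing over heads yields $c^H_{kl} := \sum_{j\in N} h_{jkl}$ as a linearization of $Q^H$. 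Adding the four matrices, $C := C^A + C^B + C^D + C^H$ is a linearization of $Q$, which completes the argument.

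The only part requiring genuine care — and the closest thing to an obstacle here — is the bookkeeping: for each of the four tensors one must correctly identify the free index and decide whether its collapsed sum runs over tails or over heads of the tour. The side constraints $i\neq j$ and $k\neq l$ under which the decomposition is given cause no difficulty, since they hold automatically for every pair of arcs of a tour, so substituting the decomposition into $Q[\tau]$ is valid on all arc pairs, including coincident and adjacent ones.
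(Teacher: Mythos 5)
Your proof is correct and takes essentially the same route as the paper's: both collapse the sum over the second arc using that each vertex of a tour is the tail (resp.\ head) of exactly one arc, arriving at exactly the paper's linearization $c_{ij} = \sum_{k} a_{ijk} + \sum_{k} b_{ijk} + \sum_{k} d_{kij} + \sum_{k} h_{kij}$. Splitting $Q$ into four summands and invoking closure of linearizable matrices under addition is only a cosmetic repackaging of the paper's single computation with the indicator variables $x_{ij}$.
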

\begin{proof}
  Let $\alpha_{ij} := \sum \limits_{k=1}^n a_{ijk}$, $\beta_{ij} := \sum \limits_{k=1}^n b_{ijk}$, $\gamma_{ij} := \sum \limits_{k=1}^n d_{kij}$, $\delta_{ij} := \sum \limits_{k=1}^n h_{kij}$ and $c_{ij} := \alpha_{ij} + \beta_{ij} + \gamma_{ij} + \delta_{ij}$ for all $i,j$ with $i\neq j$.
  Also, let $x_{ij}$ be the binary decision variable which takes value $1$ if and only if the underlying tour contains arc $(i,j)$.
  Thus, the values of $x_{ij}$ completely define a tour.
  Now,
\begin{align*}
  Q[\tau]
  &= \sum \limits_{i=1}^n \sum \limits_{j=1,j\neq i}^n \sum \limits_{k=1}^n \sum \limits_{l=1,k\neq l}^n q_{ijkl}x_{ij}x_{kl} \\
  &= \sum \limits_{i=1}^n \sum \limits_{j=1,j\neq i}^n \sum \limits_{k=1}^n \sum \limits_{l=1,k\neq l}^n (a_{ijk}+b_{ijl}+d_{ikl}+h_{jkl})x_{ij}x_{kl} \\
  &= \sum \limits_{i=1}^n \sum \limits_{j=1,j\neq i}^n x_{ij} \sum \limits_{k=1}^n a_{ijk} \sum \limits_{l=1,k\neq l}^n x_{kl} + \sum \limits_{k=1}^n \sum \limits_{l=1,k\neq l}^n x_{kl} \sum \limits_{i=1}^n b_{ikl} \sum \limits_{j=1,i\neq j}^n x_{ij} \\
  &\qquad+ \sum \limits_{i=1}^n \sum \limits_{j=1,j\neq i}^n x_{ij} \sum \limits_{l=1}^n d_{ijl} \sum \limits_{k=1,k\neq l}^n x_{kl} + \sum \limits_{k=1}^n \sum \limits_{l=1,k\neq l}^n x_{kl} \sum \limits_{j=1}^n h_{jkl} \sum \limits_{i=1,i\neq j}^n x_{ij} \\
  &=\sum \limits_{i=1}^n \sum \limits_{j=1, i\neq j}^n ( \alpha_{ij} + \beta_{ij} + \gamma_{ij} + \delta_{ij}) x_{ij}
  = \sum \limits_{i=1}^n \sum \limits_{j=1, j\neq i}^n c_{ij}x_{ij}
  = C(\tau),
\end{align*}
which concludes the proof.
\end{proof}

The result in the previous lemma can be strengthened further by excluding the conditions of the lemma for some appropriate combinations of $i,j,k$ and $\ell$.

So far we have been considering complete directed graphs.
Corresponding results can be derived easily to solve the linearization problem for QTSP on a complete undirected graph $K_n$.

An interesting related question is to explore the linearization problem of QTSP on other meaningful classes of graphs.
The answer will very much depend on the structure of the graph.
For example, if the underlying graph is a wheel, any quadratic cost matrix $Q$ is linearizable.
Note that a wheel on $n$ nodes contains $n-1$ tours and we can write a system of $n-1$ linear equations, where each equation corresponding to tour, and the variables are elements of a 'trial' linearization.
It can be shown that this system is always consistent.
We leave the question of characterizing linearizable quadratic cost matrices of specially structured graphs as a topic for further research.

\subsection*{Acknowledgement}
This project was initiated jointly with Santosh Kabadi, who passed away.
We acknowledge the significant contributions Santosh made in obtaining the results of this paper.
The work is supported by an NSERC discovery grant and an NSERC discovery accelerator supplement awarded to Abraham P.\ Punnen.

\bibliographystyle{plain}

\end{document}